\documentclass{amsart}


\usepackage{amssymb,stmaryrd,mathrsfs,dsfont}

\theoremstyle{plain}
\newtheorem{theorem}{Theorem}[section]

\theoremstyle{definition}
\newtheorem{notation}[theorem]{Notation}
\newtheorem{example}[theorem]{Example}

\theoremstyle{remark}
\newtheorem{remark}[theorem]{Remark}

\input xy
\xyoption{all}

\def\au{\mathcal{A}}


\begin{document}
\title[On the Synchronization of CSFA]{On the Synchronization of Circular Semi-Flower Automata}

\author[Shubh N. Singh]{Shubh N. Singh}
\address{Department of Mathematics, Central University of South Bihar, Gaya, India}
\email{shubh@cub.ac.in}
\author[Ankit Raj]{Ankit Raj}
\address{Department of Mathematics, Central University of South Bihar, Gaya, India}
\email{ankitraj@cusb.ac.in}


\begin{abstract}
Pin proved that every circular automaton with a prime number of states containing a non-permutation is synchronizing. In this paper, we investigate the synchronization of circular semi-flower automata. We first prove that every semi-flower automaton is a one-cluster automaton. Subsequently, we prove that every semi-flower automaton containing a $1$-cycle is synchronizing. Further, we prove that every circular semi-flower automaton with an odd number of states containing a $2$-cycle is synchronizing.
\end{abstract}

\subjclass[2010]{20M35, 68Q45, 68Q70, 68R10}

\keywords{circular automaton, semi-flower automaton, one-cluster automaton, synchronizing automaton, \v{C}ern\'{y} conjecture}

\maketitle

\section*{Introduction}

An automaton is called \emph{synchronizing} if there exists a word, called a \emph{synchronizing word}, that sends all its states to a single state. The concept of synchronization of automata has many practical applications in such areas as robotics, manufacturing, coding theory, bio-computing, model-based testing and many others \cite{volk08}.

A lot of investigations have already been done in the area of synchronization of automata. Pin \cite{pin78} proved that every circular automaton with a prime number of states containing a non-permutation letter is synchronizing. Perles et al. \cite{rabin63} observed that the class of definite automata is a sub-class of synchronizing automata. It is verified that every strongly connected aperiodic automaton is synchronizing. For a given automaton, it is often very challenging to prove that the automaton is synchronizing or not.

Besides of the applications of synchronizing automata, there is a famous conjecture, known as \v{C}ern\'{y} conjecture, related to synchronizing automata.
\v{C}ern\'{y} conjecture states that every $n$-state synchronizing automaton has a synchronizing word of length at most $(n-1)^2$ \cite{cerny64}.
In this connection, Pin \cite{pin78} proved that the \v{C}ern\'{y} conjecture is satisfied for circular automata with a prime number of states. Further, Dubuc \cite{dubuc98} showed that all circular automata satisfies the \v{C}ern\'{y} conjecture. Steinberg \cite{stein11} proved the \v{C}ern\'{y} conjecture for one-cluster automata with prime length cycle. For another special classes of synchronizing automata, the \v{C}ern\'{y} conjecture has also been verified, or sharper bounds than the general bounds have been proven, see for instance \cite{traht07, epps90, volk04, pin83, kari03, perin11, grech13, volk09}.

This paper investigates the synchronization of circular semi-flower automata. Circular automata have been studied in various contexts \cite{pin78, dubuc98}. Semi-flower automata have been introduced to study the finitely generated submonoids of a free monoid \cite{giam07,singh12}. Using semi-flower automata, the rank and intersection problem of certain finitely generated submonoids of a free monoid have been investigated \cite{giam08, singh12, shubh12}. Semi-flower automata have also been studied in many different contexts, see for instance \cite{prib11, giam08, shubh13, shubh16, shubh18}.

The remaining part of the paper is organized as follows. In Section 1, we introduce the notation and briefly give the required background. In Section 2, we investigate the synchronization of circular semi-flower automata. In Section 3, we conclude the paper and provide some future directions for our work.

\section{Preliminaries and Notation}

In this section we first briefly introduce the notations used in this paper. Throughout this paper, $n$ is an integer greater than $1$. Let $P$ be a non-empty finite set. The number of elements in $P$ is denoted by $|P|$. We write argument of a transformation $\alpha$ of $P$ on its left so that $p\alpha$ is the value of $\alpha$ at the argument $p\in P$. The composition of transformations is designated by concatenation, with the leftmost transformation understood to apply first, so that $p(\alpha \beta) = (p\alpha)\beta$. We denote by $T_n$ the full transformation monoid of a set with $n$ elements.

Let $D$ be a (labeled) digraph. The vertex set of $D$ is denoted by $V(D)$. A \emph{path} in $D$ is an alternating finite sequence $v_0, e_1, v_1, \ldots v_{k-1}, e_k, v_k$ of distinct vertices and (labeled) edges such that, for $1\le i \le k$, the tail and the head of edge $e_i$ are vertices $v_{i-1}$ and $v_{i}$, respectively. A path with at least one edge is called a \emph{cycle} if its initial vertex and terminal vertex are the same. The \emph{length} of a path is the number of its edges. A \emph{$k$-cycle} is a cycle of length $k$. If there is a path from vertex $u$ to vertex $v$, then the \emph{distance} from $u$ to $v$ is the length of shortest path from $u$ to $v$.

An \emph{alphabet} is a non-empty finite set. The elements of an alphabet are referred to as \emph{letters}. We denote an alphabet by the symbol $A$.
We use symbols $A^*$ and $\varepsilon$ to denote the set of all words over $A$ and the empty word, respectively. We consider an \emph{automaton} (over $A$) as a quintuple $\au = (Q, A, \delta, q_0, F)$, where $Q$ is the non-empty finite set of \emph{states}, $q_0 \in Q$ is the \emph{initial} state, $F \subseteq Q$ is the set of \emph{final} states, and $\delta: Q\times A \rightarrow Q$ is the \emph{transition} (total) function. An automaton with $n$ states is called an $n$-state automaton. The canonical extension of $\delta$ to the set $Q\times A^*$ is still denoted by $\delta$.

Let $\au$ be an $n$-state automaton. Each word $x \in A^*$ has a natural interpretation as a transformation of $T_n$ and we do not distinguish between the word $x$ and its interpretation. A letter $a \in A$ is called \emph{permutation} if its interpretation is a permutation; otherwise, it will be called \emph{non-permutation}. $\au$ is called \emph{circular} if there exists a permutation letter which induces a circular permutation on its set of states. The set $M(\au) = \{ x\in T_n  \;|\;x \in A^*\}$ forms a submonoid of the full transformation monoid $T_n$ called the \emph{transition monoid} of $\au$.
The automaton $\au$ is called \emph{synchronizing} if there exists a word $x\in A^*$ such that the image of $Q$ under the transformation $x \in M(\au)$ is a singleton.

Let $\au$ be an automaton. By denoting states as vertices and transitions as (labeled directed) edges, $\au$ can be represented by (labeled) digraph, denoted by $D(\au)$, in which initial state and final states shall be distinguished appropriately. For $a\in A$, we define an $a$-\emph{edge} as an edge labeled by the letter $a$. A \emph{path} (respectively, \emph{cycle}) in $\au$ is a path (respectively, a cycle) in $D(\au)$. A state $q$ is said to be \emph{accessible} (respectively, \emph{co-accessible}) if there exists a path from $q_0$ to $q$ (respectively, a path from $q$ to some final state). $\au$ is called \emph{semi-flower automaton} (in short, SFA) if $F = \{q_0\}$, every state is both accessible and co-accessible, and all cycles in $\au$ visit the unique initial-final state $q_0$. For further basic definitions concerning digraphs and automata we refer \cite{gutin02,lawson04}.

Let $G$ be a finite group with identity element $e$. We write $|G|$ to denote the order of $G$. Let $g$ be an element of $G$. The \emph{order} of the element $g$ is the smallest positive integer $t$ such that $g^t = e$. The cyclic subgroup generated by the element $g$ is denoted by  $\langle g\rangle$. The group $G$ is called  \emph{cyclic} if $G = \langle g \rangle$ for some $g\in G$. In this case, $g$ is called a \emph{generator} of $G$. It is well known that any two finite cyclic groups of the same order are isomorphic. In a finite group, the order of a group element divides the order of its group. Therefore, if $|G|$ is odd, the order of each element of $G$ is also odd. All further unexplained notation and terminology of groups we refer \cite{foote04}.

\section{Main Results}

In this section we investigate the synchronization of circular semi-flower automata (in short, CSFA). In order to investigate the synchronization of CSFA, we recall the concept of one-cluster automata introduced by B{\'{e}}al and Perrin in \cite{beal09}.

Let $\au$ be an automaton and let $\mathcal{R}$ be the sub-digraph of $D(\au)$ made of the $b$-edges. A connected component of $\mathcal{R}$ is called a $b$-\emph{cluster}. Note that each $b$-cluster contains a unique cycle, called a $b$-cycle, with possible trees attached to $b$-cycle at their root. A \emph{one-cluster automaton} with respect to a letter $b$ is an automaton which has exactly one $b$-cluster.

The following theorem proves that every SFA is a one-cluster automaton with respect to each letter.
\begin{theorem}\label{uni-cycle}
Every SFA is a one-cluster automaton with respect to each letter.
\end{theorem}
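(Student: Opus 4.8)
The plan is to show that, for any letter $b \in A$, the sub-digraph $\mathcal{R}$ of $b$-edges is connected; since every vertex has exactly one outgoing $b$-edge, connectedness of $\mathcal{R}$ is equivalent to $\mathcal{R}$ having a single $b$-cluster. The key structural fact I would exploit is the defining property of an SFA: every cycle in $\au$ passes through the unique initial-final state $q_0$. In particular, since $b$ induces a transformation on the finite set $Q$, iterating $b$ from any state $q$ eventually enters a $b$-cycle; but a $b$-cycle is a cycle in $\au$, hence it contains $q_0$. Therefore from every state $q$ there is a $b$-path reaching $q_0$, i.e. $q$ lies in the same $b$-cluster as $q_0$.

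First I would fix $b \in A$ and an arbitrary state $q \in Q$. Consider the sequence $q, qb, qb^2, \ldots$ in $Q$; since $Q$ is finite, there exist $i < j$ with $qb^i = qb^j$, and I would take $i$ minimal with this property, so that $qb^i, qb^{i+1}, \ldots, qb^{j-1}$ are distinct and $qb^i, b, qb^{i+1}, b, \ldots, qb^{j-1}, b, qb^i$ is a $b$-cycle in $\au$. By the SFA hypothesis this cycle visits $q_0$, so $q_0 = qb^k$ for some $k$ with $i \le k \le j-1$. Then $q, b, qb, b, \ldots, b, qb^k = q_0$ is a $b$-path from $q$ to $q_0$ in $\mathcal{R}$. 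Since $q$ was arbitrary, every state is connected to $q_0$ within $\mathcal{R}$, so $\mathcal{R}$ is connected and has exactly one $b$-cluster. As $b$ was an arbitrary letter, $\au$ is a one-cluster automaton with respect to each letter.

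One small technical point to be careful about is the degenerate case where the $b$-cycle through $q$ already starts at $q$ itself (i.e. $i = 0$), or where $qb^i = q_0$ trivially; these are handled by the same argument, reading the $b$-path as the empty path when $q = q_0$. I do not anticipate a genuine obstacle here: the argument is essentially a direct unwinding of the definitions, and the only thing that makes it work is the ``all cycles visit $q_0$'' clause in the definition of SFA, which immediately forces the unique $b$-cycle of any $b$-cluster to contain $q_0$, collapsing all $b$-clusters into one. If anything, the subtlety is purely expository — making sure the finite-orbit-enters-a-cycle argument is phrased cleanly using the paper's left-action convention $qb^k = q(b^k)$.
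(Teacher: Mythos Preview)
Your proposal is correct and follows essentially the same approach as the paper: both argue that from any state one reaches $q_0$ by iterating $b$, because the unique $b$-cycle in each cluster is a cycle in $\au$ and hence must pass through $q_0$ by the SFA axiom. The only difference is expository---you spell out the finite-orbit pigeonhole argument explicitly, whereas the paper invokes the structural fact that each $b$-cluster contains a unique cycle and concludes directly.
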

\begin{proof}
Let $\au$ be an SFA and let $b\in A$. Consider the sub-digraph $\mathcal{R}$ of the digraph $D(\au)$ made of the $b$-edges. It is sufficient to prove that the underlying graph of $\mathcal{R}$ is connected. Let $p$ be a state such that $p \neq q_0$. We claim that $p \;b^r = q_0$ for some positive integer $r$. Clearly, the state $p$ belongs to exactly one $b$-cluster. Recall that each $b$-cluster contains a unique cycle. Since $\au$ is an SFA, the initial-final state $q_0$ belongs to each cycle, and subsequently the state $q_0$ belongs to each $b$-cluster. Therefore $p \; b^r = q_0$ for some positive integer $r$, and consequently the underlying graph of $\mathcal{R}$ is connected. This proves our theorem.
\end{proof}

In view of Theorem \ref{uni-cycle}, we get that every SFA contains a unique $b$-cycle, where $b\in A$.
\begin{notation}
We shall denote by $C$ the $b$-cycle in an SFA. Further, the set of states in the $b$-cycle $C$ shall be denoted by $V(C)$.
\end{notation}
Obviously $q_0 \in V(C)$. We now recall the definition of the level of automata from \cite{beal09}. Let $\au$ be an automaton and let $b\in A$. The \emph{level} of a state $q$ in a $b$-cluster is defined as the distance between $q$ and the root of the tree containing $q$. If $q$ belongs to the cycle, its level is defined as zero. The \emph{level} of $\au$ is the maximal level of its states.

\begin{notation}
We shall denote the level of an automaton by symbol $l$.
\end{notation}
Note that $Qb^l = V(C)$. If the length of the cycle $C$ is $1$, the following theorem asserts about the synchronization of an SFA.

\begin{theorem}\label{cycle1}
Let $\au$ be an SFA. If $C$ is a $1$-cycle in $\au$, then $\au$ is synchronizing.
\end{theorem}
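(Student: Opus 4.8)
The plan is to produce an explicit synchronizing word, namely a power of the letter $b$ whose $b$-cycle is the given $1$-cycle. Fix $b\in A$ such that the $b$-cycle $C$ of $\au$ is a $1$-cycle. First I would recall from Theorem~\ref{uni-cycle} that $\au$ is a one-cluster automaton with respect to $b$, so $C$ is the unique cycle of the unique $b$-cluster and every state of $\au$ lies in this one cluster. Since $C$ has length $1$, we have $|V(C)|=1$; and since $q_0\in V(C)$ always holds, this forces $V(C)=\{q_0\}$. In other words, the only $b$-edge lying on a cycle is the loop at $q_0$ labelled $b$.

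Next I would invoke the identity $Qb^l=V(C)$ recorded just before the statement, where $l$ denotes the level of $\au$. Combined with $V(C)=\{q_0\}$, this yields $Qb^l=\{q_0\}$, so the word $b^l$ sends the whole state set to the single state $q_0$; hence $b^l$ is a synchronizing word and $\au$ is synchronizing. (Here $l\ge 1$, since $n>1$ rules out $Q=V(C)$; but this observation is not even needed, as $b^l$ works for every $l\ge 0$.)

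The only step that genuinely requires an argument — and the one I would spell out — is the equality $Qb^l=V(C)$. On one hand, $b$ restricts to a cyclic permutation of $V(C)$, so $V(C)b^l=V(C)$. On the other hand, a state $q$ at level $k\le l$ in the $b$-cluster reaches the cycle after exactly $k$ applications of $b$, so $qb^l\in V(C)$; since every state has level at most $l$, this gives $Qb^l\subseteq V(C)$. The two inclusions yield $Qb^l=V(C)$, and under the hypothesis this set is the singleton $\{q_0\}$, so no further obstacle remains and the proof concludes. I do not expect any serious difficulty here: the whole argument is essentially a bookkeeping consequence of Theorem~\ref{uni-cycle} together with the structural fact $Qb^l=V(C)$.
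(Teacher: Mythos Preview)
Your proposal is correct and follows essentially the same approach as the paper: both use $q_0\in V(C)$ together with $|V(C)|=1$ to get $V(C)=\{q_0\}$, and then invoke the identity $Qb^l=V(C)$ to conclude that $b^l$ is a synchronizing word. Your version is simply more detailed, in particular spelling out the justification for $Qb^l=V(C)$ that the paper takes as given.
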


\begin{proof}
Recall that $q_0 \in V(C)$. Since $|V(C)| = 1$, it follows that $V(C)=\{q_0\}$. Hence, $Q b^l = V(C) = \{q_0\}$. This completes our proof.
\end{proof}

We now recall the necessary result from \cite{shubh16}.

\begin{theorem}[\cite{shubh16}]\label{c3.l.ucp}
Let $\au$ be an SFA.
\begin{enumerate}
	\item[\rm (i)] For $a \in A$, if $a$ is a permutation, then $a$ is a circular permutation.
	\item[\rm(ii)] For $a, b \in A$, if $a$ and $b$ are permutations, then the permutations $a$ and $b$ are same.
\end{enumerate}
\end{theorem}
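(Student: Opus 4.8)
The whole argument rests on one consequence of the SFA axioms: since every cycle of $\au$ visits $q_0$, the subdigraph of $D(\au)$ induced on $Q\setminus\{q_0\}$ contains no directed cycle. I would record this observation first and use it in both parts.

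For part (i), fix a permutation letter $a$ and an arbitrary state $q$. Taking $k\ge 1$ least with $qa^{k}=q$, the states $q,qa,\dots,qa^{k-1}$ are pairwise distinct, so $q\to qa\to\cdots\to qa^{k-1}\to q$ is a $k$-cycle in $D(\au)$ with every edge labelled $a$; being a cycle in an SFA it visits $q_0$, and hence $q_0$ lies in the $\langle a\rangle$-orbit of $q$, i.e.\ $q\in\{q_0,q_0a,q_0a^{2},\dots\}$. As $q$ was arbitrary, the orbit of $q_0$ under $a$ is all of $Q$, so $a$ is a single $n$-cycle on $Q$, that is, a circular permutation. This part is essentially a one-line orbit argument and should present no difficulty.

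For part (ii), take two permutation letters $a$ and $b$. By part (i) both are circular, so with $v_{i}=q_0a^{i}$ and $w_{i}=q_0b^{i}$ we have $Q=\{v_0,\dots,v_{n-1}\}=\{w_0,\dots,w_{n-1}\}$, the $v_i$ pairwise distinct, the $w_i$ pairwise distinct, and $v_0=w_0=q_0$. Deleting from the $a$-cycle and from the $b$-cycle the two edges incident with $q_0$ leaves a Hamiltonian path $v_1\to v_2\to\cdots\to v_{n-1}$ of $a$-edges and a Hamiltonian path $w_1\to w_2\to\cdots\to w_{n-1}$ of $b$-edges, both living inside the subdigraph induced on $Q\setminus\{q_0\}$. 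That subdigraph is acyclic, and two Hamiltonian paths on the same finite vertex set whose union is acyclic must coincide: otherwise some pair of vertices $x,y$ would have $x$ before $y$ along one path and $y$ before $x$ along the other, so the $x$-to-$y$ segment of the first path followed by the $y$-to-$x$ segment of the second would close a directed cycle. Hence $v_{i}=w_{i}$ for every $i$, and then for an arbitrary $q\in Q$, writing $q=v_{i}=w_{i}$, we get $qa=v_{i+1\bmod n}=qb$; therefore $a=b$ as transformations.

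The routine ingredients --- that the orbits of a finite permutation genuinely are cycles in $D(\au)$, and that a transitive cyclic action forces the generator to be an $n$-cycle --- need only care, not ideas. The step carrying the real content is the rigidity statement inside part (ii): that acyclicity of $D(\au)$ away from $q_0$ forces the $a$-Hamiltonian path and the $b$-Hamiltonian path on $Q\setminus\{q_0\}$ to be literally the same path. I expect the cleanest route is to take the first index at which the two vertex orderings disagree and close up a short directed cycle there; pinning down that small combinatorial lemma correctly is the main thing to watch.
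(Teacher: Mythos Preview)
The paper does not actually prove this theorem: it is quoted verbatim from \cite{shubh16} (introduced with ``We now recall the necessary result from \cite{shubh16}'') and no argument is given. So there is nothing in the present paper to compare your proof against.

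That said, your argument is correct and self-contained. Part (i) is exactly the right orbit argument: each $\langle a\rangle$-orbit is a genuine cycle in $D(\au)$ in the paper's sense (you correctly use minimality of $k$ to get distinct vertices), every cycle passes through $q_0$, hence every orbit contains $q_0$, hence there is a single orbit. For part (ii), your reduction to ``an acyclic digraph admits at most one Hamiltonian path'' is clean; the one point to tighten is that the concatenation of the $x\!\to\!y$ segment of one path with the $y\!\to\!x$ segment of the other is in general only a closed \emph{walk}, not a simple cycle, but any closed directed walk contains a simple directed cycle, so the contradiction goes through. Equivalently, and perhaps more crisply: the union $P_1\cup P_2$ is a DAG, any topological order of it must agree with the vertex order of each Hamiltonian path, and hence the two vertex orders coincide. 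Either way you then recover $q_0a^{i}=q_0b^{i}$ for all $i$, and your final line correctly deduces $a=b$ as transformations of $Q$.
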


Unless otherwise stated, in what follows, $\au$ denotes an $n$-state CSFA. In view of Theorem \ref{c3.l.ucp}, there is a unique circular permutation induced by letters. For the rest of paper, we fix the following regarding $\au$. Assume that the letter $a\in A$ induces the circular permutation, and accordingly $q_0, q_1, \ldots, q_{n-1}$ is the cyclic ordering of $Q$ with respect to $a$. We use symbol $G$ to denote the cyclic subgroup of the transition monoid $M(\au)$ generated by the permutation $a$.

For completeness, we state the following simple result from \cite{shubh18}.
\begin{remark}[\cite{shubh18}]\label{t.csfa}
Let $\au$ be an $n$-state circular semi-flower automaton. Then
\begin{enumerate}
	\item[\rm (i)] $|G| = n$.
	\item[\rm(ii)] $G$ is the group of units of $M(\au)$.
\end{enumerate}
\end{remark}

For an odd integer $n$, If the length of the cycle $C$ is $2$, the following theorem proves that an $n$-state circular semi-flower automaton $\au$ is synchronizing.

\begin{theorem} \label{main1}
For an odd integer $n$, let $\au$ be an $n$-state circular semi-flower automaton. If $C$ is a $2$-cycle in $\au$, then $\au$ is synchronizing.
\end{theorem}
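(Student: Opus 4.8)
The plan is to use the identity $Qb^l = V(C)$ to reduce synchronization to collapsing the two-element set $V(C)$ to a single point, and then to play the circular permutation $a$ against the oddness of $n$ in order to move $V(C)$ onto some pair that $b^l$ fails to keep separated.

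First I would fix notation. Since $q_0 \in V(C)$ and $|V(C)| = 2$, write $V(C) = \{q_0, q_d\}$ with $1 \le d \le n-1$, so that $d \not\equiv 0 \pmod{n}$. As $a$ acts on $Q$ by $q_j a = q_{j+1}$ with indices read modulo $n$, we have $V(C)\,a^i = \{q_i, q_{i+d}\}$ for every $i$. Also recall that $Qb^l = V(C)$, so $q_i b^l \in V(C) = \{q_0, q_d\}$ for every $i$.

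The heart of the proof is the claim that there exists an index $i \in \{0, 1, \dots, n-1\}$ with $q_i b^l = q_{i+d} b^l$. I would prove this by contradiction: suppose $q_i b^l \ne q_{i+d} b^l$ for all $i$. Then, for each $i$, the two distinct points $q_i b^l$ and $q_{i+d} b^l$ together exhaust the two-element set $\{q_0, q_d\}$, so exactly one of them equals $q_0$; define $\chi\colon \mathbb{Z}/n\mathbb{Z} \to \{0,1\}$ by $\chi(i) = 0$ if $q_i b^l = q_0$ and $\chi(i) = 1$ otherwise, so that $\chi(i) \ne \chi(i+d)$ for all $i$. Running along the sequence $\chi(0), \chi(d), \chi(2d), \dots$, consecutive entries differ, whence $\chi(kd) \equiv \chi(0) + k \pmod{2}$ for every $k \ge 0$. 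Taking $k = m$, the order of $d$ in $\mathbb{Z}/n\mathbb{Z}$, and using $md \equiv 0 \pmod{n}$, this gives $\chi(0) = \chi(md) \equiv \chi(0) + m \pmod{2}$, forcing $m$ to be even; but $m$ divides $n$, which is odd -- a contradiction. This is exactly the point at which oddness of $n$ is used, and it is the only nontrivial step in the argument.

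Finally, having chosen $i$ with $q_i b^l = q_{i+d} b^l$, I would compute $Q(b^l a^i b^l) = (Qb^l)\,a^i b^l = V(C)\,a^i b^l = \{q_i, q_{i+d}\}\,b^l = \{q_i b^l\}$, a singleton; hence $b^l a^i b^l$ is a synchronizing word and $\mathcal{A}$ is synchronizing. The main obstacle is recognizing and correctly phrasing the parity obstruction above; the remainder is routine bookkeeping with $Qb^l = V(C)$ and the shift action of $a$.
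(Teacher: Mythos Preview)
Your proof is correct, and it reaches the same conclusion by a genuinely different route than the paper. The paper exploits the $2$-cycle structure of $C$ explicitly: it first observes that $b^{1+2l}$ sends all of $Q$ into $\{q_0,q_m\}$ and swaps $q_0$ and $q_m$, then studies the family of words $w_k = a^{k(n-m)}b^{1+2l}$ acting on $\{q_0,q_m\}$; assuming no $w_k$ collapses the pair, it tracks $q_0\cdot w_k$ inductively and derives a contradiction from the fact that the order of $a^{n-m}$ in the cyclic group of size $n$ is odd. Your argument, by contrast, never uses that $b$ swaps the two cycle states: you only use $Qb^l = V(C)$ to define the $2$-colouring $\chi$, and the parity obstruction is phrased purely additively in $\mathbb{Z}/n\mathbb{Z}$ (the additive order of $d$ must be even, yet it divides the odd number $n$). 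This is cleaner and more self-contained, and it yields the shorter synchronizing word $b^l a^i b^l$; the paper's approach, on the other hand, keeps the $2$-cycle dynamics of $b$ in view and produces an explicit family of candidate words indexed by $k$.
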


\begin{proof}
Recall that $q_0 \in V(C)$. Since $|V(C)| = 2$, let $q_m\;(1\le m \le n-1)$ be another state such that $q_m \in V(C)$, and so $V(C) = \{q_0, q_m\}$. Then $Q\cdot b^l = V(C) = \{q_0, q_m\}$. Clearly, $q_m\cdot b = q_0$ and $q_0\cdot b = q_m$. It follows that
\[q_0 \cdot b^2  = q_0 \Longrightarrow q_0 \cdot b^{2l}  = q_0 \Longrightarrow q_0 \cdot b^{1+2l} = q_m \cdot b^{2l} = q_m ,\] and
\[q_m \cdot  b^2 = q_m \Longrightarrow q_m \cdot  b^{2l} = q_m \Longrightarrow q_m \cdot b^{1+2l} = q_0 \cdot b^{2l} = q_0.\]
Thus, $b^{1+2l}$ maps $Q$ into $\{q_0,q_m\}$ and swaps $q_0$ and $q_m$.

Note that $q_m \cdot a^{(n-m)} = q_0$. We now consider the sequence $\langle w_k\rangle$ of words, where $w_k := a^{k(n-m)}b^{1+2l}$. From the suffix word $b^{1+2l}$, it is clear that $\{q_0, q_m\}\cdot w_k \subseteq \{q_0,q_m\}$ for each $k$. If there exists a $k$ such that $\{q_0, q_m\}\cdot w_k$ is a singleton, then the automaton $\au$ is synchronizing.

Otherwise, for $k=1$, we have \[q_m\cdot a^{(n-m)}b^{1+2l} = q_0\cdot b^{1+2l} = q_m, \mbox{ and therefore } q_0\cdot a^{(n-m)}b^{1+2l} = q_0.\] By induction, we get that
\begin{equation}\label{odd}
q_0 \cdot a^{k(n-m)}b^{1+2l} =
\begin{cases}
q_0 & \text{if $k$ is odd}\\
q_m & \text{if $k$ is  even}.
\end{cases}
\end{equation}
Since by assumption $n$ is an odd number and the letter $a$ induces a circular permutation, the order of $a^{(n-m)}$ is also an odd number, say $t$. Now for this number $t$, we have that $a^{t(n-m)}$ induces the identity transformation. Hence \[q_0 \cdot a^{t(n-m)}b^{1+2l} = q_0 \cdot b^{1+2l} = q_m,\] a contradicting to the previous statement. This completes our proof.
\end{proof}

The following example shows that Theorem \ref{main1} is not necessarily true when circular semi-flower automaton has an even number of states.
\begin{example}\label{even}
Consider $6$-state circular semi-flower automaton $\au_1$ over $A = \{a, b\}$  given in the Figure \ref{evensic}. Clearly, $V(C) = \{q_0, q_3\}$. By using the computer algebra system  GAP--Groups, Algorithms and Programming \cite{gap} we observe that the transition monoid $M(\au_1)$ does not contain a constant transformation. Hence, $\au_1$ is non-synchronizing.
\begin{figure}[ht]
\entrymodifiers={++[o][F-]} \SelectTips{cm}{}
\[\xymatrix{*\txt{} & q_1 \ar[r]^a \ar[drr]^b & q_2 \ar[dr]^{a, \ b}& *\txt{}\\
 *++[o][F=]{q_0} \ar[ur]^a \ar@/^0.9pc/[rrr]^b & *\txt{} & *\txt{} & q_3\ar[ld]^a\ar@/^0.9pc/[lll]^b\\
 *\txt{} & q_5 \ar[ul]^{a, \ b} & q_4\ar[l]^a \ar[ull]^b & *\txt{}}\]
\caption{$6$-state non-synchronizing CSFA $\au_1$ with $|V(C)| = 2$}
\label{evensic}
\end{figure}
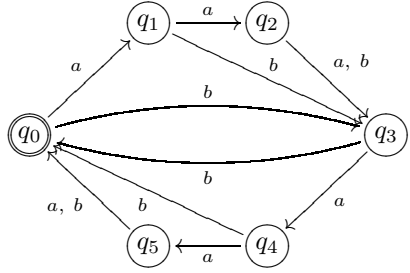
\end{example}

For an odd integer $n$, let $\au$ be an $n$-state circular semi-flower automaton. If $C$ is a $3$-cycle in $\au$, then $\au$ is not necessary synchronizing as shown in the following example.
\begin{example}
Consider $9$-state circular semi-flower automaton $\au_2$ given in the Figure \ref{nsic01}. Clearly, $V(C) = \{q_0, q_3, q_6\}$. By using the computer algebra system  GAP--Groups, Algorithms and Programming \cite{gap} we observe that the transition monoid $M(\au_2)$ does not contain a constant transformation. Hence, $\au_2$ is non-synchronizing.
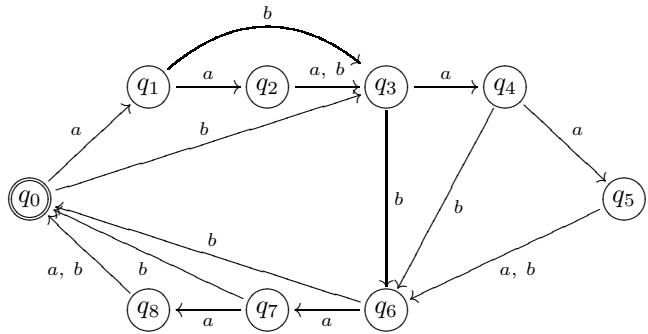
\begin{figure}[ht]
\entrymodifiers={++[o][F-]} \SelectTips{cm}{}
\[\xymatrix{*\txt{} & q_1 \ar[r]^a \ar@/^1.9pc/[rr]^b & q_2\ar[r]^{a, \ b} & q_3\ar[r]^a \ar[dd]^b& q_4\ar[dr]^a\ar[ddl]^b\\
*++[o][F=]{q_0} \ar[ur]^a \ar[urrr]^b & *\txt{} &*\txt{} &*\txt{} &*\txt{} & q_5 \ar[dll]^{a, \ b}\\
*\txt{} & q_8\ar[ul]^{a,\ b} & q_7 \ar[l]^a \ar[ull]^b & q_6\ar[l]^a \ar[ulll]_b & *\txt{} & *\txt{}}\]
\caption{$9$-state non-synchronizing CSFA $\au_2$ with $|V(C)| = 3$}
\label{nsic01}
\end{figure}
\end{example}

\section*{Conclusion and further directions}

This work investigated the synchronization of circular semi-flower automata (CSFA). We proved that every semi-flower automaton is one-cluster automaton, and subsequently we observed that every semi-flower automaton containing a $1$-cycle is synchronizing. Further, for an odd number $n$, we proved that every $n$-state CSFA containing a $2$-cycle is synchronizing. Using the computer algebra system GAP, we finally provided examples of $6$-state non-synchronizing CSFA and $9$-state non-synchronizing CSFA containing, respectively, $2$-cycle and $3$-cycle.

In the present work, we observed that an $n$-state CSFA is in general not synchronizing. Therefore, one could investigate on the sufficient conditions for the synchronization of an $n$-state CSFA.

\end{document}